\title{Random Generation of \\
  Nondeterministic Finite-State Tree Automata}
\author{Thomas Hanneforth
\institute{Universit\"at Potsdam, Department Linguistik \\
  Karl-Liebknecht-Stra\ss e 24--25, 14476 Potsdam, Germany}
  \email{thomas.hanneforth@uni-potsdam.de}
\and Andreas Maletti\footnotemark[1]
  \qquad\qquad Daniel Quernheim\footnotemark[1]
 \institute{Universit\"at Stuttgart, Institut f\"ur Maschinelle
  Sprachverarbeitung \\ Pfaffenwaldring~5b, 70569 Stuttgart, Germany}
  \email{\{maletti,daniel\}@ims.uni-stuttgart.de}}
\newtheorem{definition}{Definition}
\newtheorem{example}[definition]{Example}
\newtheorem{theorem}[definition]{Theorem}
\providecommand{\abs}[1]{\ensuremath{\lvert #1 \rvert}}
\providecommand{\nat}[0]{\ensuremath{{\mathbb N}}}
\providecommand{\seq}[3]{\ensuremath{#1_{#2}, \dotsc, #1_{#3}}}
\DeclareMathOperator{\rk}{rk}
\begin{document}

\renewcommand{\thefootnote}{\fnsymbol{footnote}}
\footnotetext[1]{Financially supported by
    the German Research Foundation~(DFG) grant MA\,/\,4959\,/\,1-1.}
\renewcommand{\thefootnote}{\arabic{footnote}}

\maketitle

\begin{abstract}
  Algorithms for (nondeterministic) finite-state tree automata~(FTAs)
  are often tested on random FTAs, in which all internal transitions
  are equiprobable.  The run-time results obtained in this manner are
  usually overly optimistic as most such generated random FTAs are
  trivial in the sense that the number of states of an equivalent
  minimal deterministic FTA is extremely small.  It is demonstrated
  that nontrivial random FTAs are obtained only for a narrow band of
  transition probabilities.  Moreover, an analytic analysis yields a
  formula to approximate the transition probability that yields the
  most complex random FTAs, which should be used in experiments.
\end{abstract}

\section{Introduction}
Nondeterministic finite-state tree automata~(FTAs) play a major role in
several areas of natural language processing.  For example, the
\textsc{Berkeley} parser~\cite{petbarthikle06} uses a (weighted) FTA
as do syntax-based approaches to statistical machine translation.
Toolkits~\cite{maykni06,lensimvoj12} for FTAs allow users to easily run
experiments.  However, algorithms like determinization typically
cannot be tested on real-world examples (like the FTA of the
\textsc{Berkeley} parser) due to their complexity.  In such cases, the
inputs are often random FTAs, which are typically created by fixing
densities, which are the probabilities that any given potential
transition (of a certain group) is indeed a transition of the
generated FTA (see~\cite{zij97} and \cite{les95} for the generation of
random finite-state string automata).

It is known~\cite{chahanparzia04} that for string automata most random
automata are trivial in the sense that the equivalent minimal
deterministic finite-state string automaton is extremely small.  Here,
we observe the same effect for FTAs, which means that testing
algorithms on random FTAs also has to be done carefully to avoid
vastly underestimating their actual run-time.  To simplify such
experiments and to make them more representative we provide both
empirical and analytic evidence for the nontrivial (difficult)
cases.  In the empirical evaluation we create many random FTAs with a
given number~$n$ of (useful) states using a given transition
density~$d$.  We then determinize and minimize them, and we record the
number of states of the resulting canonical FTA (i.e., the equivalent
minimal deterministic FTA).  As expected, outside a narrow density
band the randomly generated FTAs yield very small canonical~FTA, which
means that they are in a sense trivial.\footnote{Trivial here does not
  mean that the recognized tree language is uninteresting, but rather
  it only relates to its complexity.}  It can be observed that if the
density is above the upper limit of the band, then the FTAs accept
almost everything, whereas FTAs with densities below the lower limit
accept almost nothing.  This observation also justifies calling them
trivial.

In the analytic evaluation, given a number~$n$ of states, we compute
densities~$d(n)$, for which we expect the most difficult FTAs.
Looking at the empirical results, the formula predicts the narrow
density band belonging to complex random FTAs very well.
Consequently, we promote experiments with random FTAs that use exactly
these predicted densities in order to avoid experiments with (only)
trivial FTAs.  Finally, we discuss how parameter changes affect our
results.  For example, adding another binary input symbol does not
move the interesting narrow density band, but it generally does
increase the sizes of the obtained canonical FTA.  Moreover, it shows
that whenever we obtain large deterministic FTA after determinization
(i.e., before minimization), then the corresponding canonical FTA are
also large (i.e., after minimization).  This demonstrates that our
implementation of determinization is rather efficient.

\section{Finite tree automata}
\label{sec:fta}
The power set of a set~$S$ is $\mathcal{P}(S) = \{S' \mid S' \subseteq
S\}$.  The set of nonnegative integers is denoted by~$\nat$.  An
alphabet is simply a finite set of symbols.  A ranked alphabet
$(\Sigma, \mathord{\rk})$ consists of an alphabet~$\Sigma$ and a
mapping $\mathord{\rk} \colon \Sigma \to \nat$, which assigns a rank
to each symbol.  For every $k \in \nat$, we let $\Sigma_k = \{ \sigma
\in \Sigma \mid \rk(\sigma) = k\}$ be the set of symbols of rank~$k$.
We also write $\sigma^{(k)}$ to indicate that the symbol~$\sigma$ has
rank~$\rk(\sigma) = k$.  To keep the presentation simple, we typically
write just~$\Sigma$ for the ranked alphabet~$(\Sigma, \mathord{\rk})$
and assume that the ranking~`$\mathord{\rk}$' is clear from the
context.  Moreover, we often drop obvious universal quantifications
like `$k \in \nat$' in expressions like~`for all $k \in \nat$ and
$\sigma \in \Sigma_k$'.  Our trees have node labels taken from a
ranked alphabet~$\Sigma$ and leaves can also be labeled by elements of
a finite set~$Q$.  The rank of a symbol~$\sigma \in \Sigma$ determines
the number of direct children of all nodes labeled~$\sigma$.  Given a
set~$T$, we write $\Sigma(T)$ for the set $\{\sigma(\seq t1k) \mid
\sigma \in \Sigma_k, \seq t1k \in T\}$.  The set~$T_\Sigma(Q)$ of
\mbox{$\Sigma$-trees indexed by~$Q$} is defined as the smallest
set~$T$ such that $\Sigma(T) \cup Q \subseteq T$.  We write~$T_\Sigma$
for~$T_\Sigma(\emptyset)$.

Next, we recall finite-state tree automata~\cite{gecste84,gecste97}
and the required standard constructions.  In general, finite-state
tree automata~(FTAs) offer an efficient representation of the regular
tree languages.  We distinguish a (bottom-up) deterministic variant
called deterministic FTA, which will be used in our size measurements.
A \emph{finite-state tree automaton}~(FTA) is a system $(Q, \Sigma, F,
P)$, where (i)~$Q$~is a finite set of \emph{states}, (ii)~$\Sigma$~is
a ranked alphabet of \emph{input symbols}, (iii)~$F \subseteq Q$ is a
set of \emph{final} states, and (iv)~$P \subseteq \Sigma(Q) \times Q$
is a finite set of \emph{transitions}.  It is \emph{deterministic} if
for every $t \in \Sigma(Q)$ there exists at most one~$q \in Q$ such
that $(t, q) \in P$.  We often write a transition~$(t, q) \in
\Sigma(Q) \times Q$ as~$t \to q$.  The size of the FTA~$M = (Q,
\Sigma, F, P)$ is~$\abs M = \abs Q$.  This is arguably a crude measure
for the `size', but it will mostly be used for deterministic FTAs,
where it is commonly used.

\begin{example} 
  \label{ex:fta}
  As illustration we consider the FTA $M_{\text{ex}} = (\{0, 1, 2,
  3\}, \Sigma, \{3\}, P)$, where $\Sigma = \{\sigma^{(2)},
  \alpha^{(0)}\}$ and $P$~contains the following transitions.
  Clearly, this FTA is not deterministic.
  \begin{align*}
    \alpha &\to 0 & \alpha &\to 2 & \sigma(0, 0) &\to 1 &
    \sigma(1,0) &\to 1 & \sigma(1, 2) &\to 3 & \sigma(1,3) &\to 3
  \end{align*}
\end{example}

In the following, let $M = (Q, \Sigma, F, P)$ be an FTA.  For every
$\sigma \in \Sigma_k$, let $\overline \sigma \colon \mathcal{P}(Q)^k
\to \mathcal{P}(Q)$ be such that $\overline\sigma(\seq Q1k) = \{ q \in
Q \mid \forall 1 \leq i \leq k,\, \exists q_i \in Q_i \colon
\sigma(\seq q1k) \to q \in P\}$ for all $\seq Q1k \subseteq Q$.  Next,
we define the action of the transitions on a tree~$t \in T_\Sigma(Q)$.
Let $P \colon T_\Sigma(Q) \to \mathcal{P}(Q)$ be such that $P(q) =
\{q\}$ for every $q \in Q$ and $P(\sigma(\seq t1k)) = \overline
\sigma(P(t_1), \dotsc, P(t_k))$ for every $\sigma \in \Sigma_k$ and
$\seq t1k \in T_\Sigma$.  The FTA~$M$ accepts the tree
language~$L(M)$, which is given by $L(M) = \{ t \in T_\Sigma \mid P(t)
\cap F \neq \emptyset\}$.  Two FTAs $M_1$~and~$M_2$ are equivalent if
$L(M_1) = L(M_2)$.  For example, for the FTA of Example~\ref{ex:fta}
we have $P(\alpha) = \{0, 2\}$ and $P(\sigma(\alpha, \alpha)) =
\{1\}$.  Moreover, $\sigma(\sigma(\sigma(\alpha, \alpha), \alpha),
\alpha) \in L(M_{\text{ex}})$.  For our experiments, we generate a
random FTA, then determinize it and compute the number of states of
the canonical FTA, which the equivalent minimal deterministic FTA.
For an FTA~$M = (Q, \Sigma, F, P)$, we construct the deterministic FTA
$\mathcal{P}(M) = (\mathcal{P}(Q), \Sigma, F', P')$ such that $F' = \{
Q' \subseteq Q \mid Q' \cap F \neq \emptyset\}$ and
\[ P' = \{ \sigma(\seq Q1k) \to \overline \sigma(\seq Q1k) \mid
\sigma \in \Sigma_k, \seq Q1k \subseteq Q\} \enspace. \]

\begin{theorem}[see~\protect{\cite[Theorem~1.10]{don70}}]
  \label{thm:det}
  $\mathcal{P}(M)$ is a deterministic FTA that is equivalent to~$M$.
\end{theorem}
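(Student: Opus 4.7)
The plan is to establish the two claims separately: determinism of $\mathcal{P}(M)$, and equivalence $L(\mathcal{P}(M)) = L(M)$.

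Determinism is immediate from the definition of~$P'$. For every $\sigma \in \Sigma_k$ and every tuple $(\seq Q1k)$ of subsets of~$Q$, the set~$P'$ contains the single transition $\sigma(\seq Q1k) \to \overline\sigma(\seq Q1k)$, whose right-hand side is uniquely determined by the left-hand side. Hence for each $t \in \Sigma(\mathcal{P}(Q))$ there exists at most one $Q' \in \mathcal{P}(Q)$ with $(t,Q') \in P'$, which is exactly the definition of a deterministic FTA.

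For equivalence, let me write $P' \colon T_\Sigma(\mathcal{P}(Q)) \to \mathcal{P}(\mathcal{P}(Q))$ for the action of $\mathcal{P}(M)$ on trees, defined analogously to~$P$ in the excerpt. The key lemma to prove by structural induction on $t \in T_\Sigma$ is that $P'(t) = \{P(t)\}$. The base case $t = \alpha \in \Sigma_0$ is a direct unfolding of the definitions of $\overline\alpha$ in both automata: $P'(\alpha)$ is the unique subset assigned by $P'$ to the nullary symbol~$\alpha$, and by construction this is exactly $\overline\alpha() = P(\alpha)$, wrapped into a singleton because $\mathcal{P}(M)$ is deterministic. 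For the induction step on $t = \sigma(\seq t1k)$, the induction hypothesis gives $P'(t_i) = \{P(t_i)\}$ for each~$i$, and then the construction of $P'$ yields $P'(t) = \{\overline\sigma(P(t_1),\dotsc,P(t_k))\} = \{P(t)\}$. Equivalence then follows from the definition of~$F'$: we have $t \in L(\mathcal{P}(M))$ iff $P'(t) \cap F' \neq \emptyset$ iff $P(t) \in F'$ iff $P(t) \cap F \neq \emptyset$ iff $t \in L(M)$.

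I do not expect any real obstacle; the argument is a routine verification of the standard subset construction adapted to the tree setting. The only point that requires care is to keep the two $\overline\sigma$ operators (one for~$M$ acting on subsets of~$Q$, one for~$\mathcal{P}(M)$ acting on subsets of~$\mathcal{P}(Q)$) notationally distinct, so that the induction step indeed reduces to the identity $\overline\sigma(\{P(t_1)\},\dotsc,\{P(t_k)\}) = \{\overline\sigma(P(t_1),\dotsc,P(t_k))\}$, which is precisely how~$P'$ was designed.
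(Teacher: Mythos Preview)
Your proof is correct and is the standard verification of the powerset construction for tree automata. The paper itself does not give a proof of this theorem: it merely states the result and cites \cite[Theorem~1.10]{don70}, so there is no in-paper argument to compare against. Your structural-induction lemma $P'(t) = \{P(t)\}$ is exactly the expected route and your handling of the two distinct $\overline\sigma$ operators is appropriate.
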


\begin{example}
  \label{ex:det}
  For the FTA~$M_{\text{ex}}$ of Example~\ref{ex:fta} an equivalent
  deterministic FTA is~$\mathcal{P}(M_{\text{ex}})$, which is given by
  $\mathcal{P}(M_{\text{ex}}) = (\mathcal{P}(Q), \Sigma, F',
  P')$ with $F' = \{3, 13, 23, 123\}$ and $P'$~contains the
  (non-trivial) transitions\footnote{We abbreviate sets
    like~$\{0,2\}$ to just~$02$.}
  \begin{align*}
    \alpha &\to 02 & \sigma(02, 02) &\to 1 &
    \sigma(1, 02) &\to 13 & \sigma(13, 02) &\to 13 &
    \sigma(1, 13) &\to 3 & \sigma(1, 3) &\to 3 \enspace.
  \end{align*}
\end{example}

A deterministic FTA is \emph{minimal} if there is no strictly smaller
equivalent deterministic FTA.  For example, the deterministic FTA in
Example~\ref{ex:det} is minimal.  It is known~\cite{bra68} that for
every deterministic FTA we can compute an equivalent minimal
deterministic FTA, called the \emph{canonical} FTA. 

\section{Random generation of FTAs}
\label{sec:random}
First, we describe how we generate random FTAs.\footnote{These FTAs
  shall serve as test inputs for algorithms that operate on FTAs such
  as determinization, bisimulation minimization, etc.  Naturally, the
  size of an equivalent minimal FTA would be an obvious complexity
  measure for them, but it is \textsc{PSpace}-complete to determine
  it, and the size of the canonical FTA is naturally always
  bigger, so trivial FTAs according to our measure are also trivial
  under the minimal FTA size measure.}  We closely follow the random
generation outlined in~\cite{zij97}, augmented by density parameters
$d_2$~and~$d_0$, which is similar to the setup of~\cite{les95}.  Both
methods~\cite{zij97,les95} are discussed in~\cite{chahanparzia04},
where they are applied to finite-state string automata~\cite{yu97}.
For an event~$E$, let $\pi(E)$~be the probability of~$E$.  To keep the
presentation simple, we assume that the ranked alphabet~$\Sigma$ of
input symbols is binary (i.e., $\Sigma = \Sigma_2 \cup
\Sigma_0$).\footnote{Our approach can easily be adjusted to
  accommodate non-binary ranked alphabets.  We can imagine a model in
  which only one density~$d$ governs all transitions, but this model
  requires a slightly more difficult analytic analysis.}  Note that
all regular tree languages~\cite{gecste84,gecste97} can be encoded
using a binary ranked alphabet.  In order to randomly generate an FTA
$M = (Q, \Sigma, F, P)$ with $n = \abs Q$~states and binary and
nullary transition densities $d_2$~and~$d_0$, each transition (incl.\
the target state) is a random variable and each state is a random
variable representing whether it is final or not.  More precisely, we
use the following approach:
\begin{compactitem}
\item $Q = \{1, \dotsc, n\}$,
\item $\pi(q \in F) = \tfrac 12$ for all $q \in Q$ (i.e., for each
  state~$q$ the probability that it is final is~$\tfrac 12$),
\item $\pi(\alpha \to q \in P) = d_0$ for all nullary $\alpha \in
  \Sigma_0$ and $q \in Q$, and
\item $\pi(\sigma(q_1, q_2) \to q \in P) = d_2$ for all binary
  symbols~$\sigma \in \Sigma_2$ and all states $q_1, q_2, q \in Q$.
\end{compactitem}
If the such created FTA is not trim\footnote{The FTA~$M$ is trim if
  for every $q \in Q$ there exist $t \in T_\Sigma$ and $t' \in
  T_\Sigma(\{q\}) \setminus T_\Sigma$ such that (i)~$q \in P(t)$ and
  (ii)~$P(t') \cap F \neq \emptyset$.},  
then we start over and generate a new FTA.  Thus, all our randomly
generated FTAs indeed have $n$~useful states.

\begin{table}[t]
  \centering
  \begin{tabular}{r|r|rrcr|r|rr}
    $n$ & $d_2$ & $d'_2$ & conf.\ interval &
    \mbox{\qquad} & $n$ & $d_2$ & $d'_2$  & conf.\ interval \\* 
    \cline{1-4} \cline{6-9}
    2 & .6364 & .6264 & [.5769,.6804] & & 8 & .0431 & .0408 & [.0317,.0526] \\*
    3 & .2965 & .2570 & [.2091,.3159] & & 9 & .0341 & .0342 & [.0272,.0430] \\*
    4 & .1696 & .1334 & [.1024,.1737] & & 10 & .0276 & .0282 & [.0231,.0343] \\*
    5 & .1094 & .0855 & [.0642,.1138] & & 11 & .0228 & .0251 & [.0208,.0303] \\*
    6 & .0763 & .0635 & [.0475,.0848] & & 12 & .0192 & .0212 & [.0182,.0248] \\*
    7 & .0562 & .0501 & [.0380,.0662] & & 13 & .0164 & .0189 & [.0162,.0219] 
  \end{tabular}
  \caption{Expected density~$d_2$ (see
    Theorem~\protect{\ref{thm:main}}) and observed density~$d'_2$ (see
    Section~\protect{\ref{sec:data}}) for the most complex FTAs
    in Setting~(A).  We
    also report 
    confidence intervals for the confidence
    level $p > .95$.} 
  \label{tab:densities}
\end{table}

\section{Analytic analysis}
\label{sec:analysis}
In this section, we present a short analytic analysis and compute
densities, for which we expect the randomly generated FTAs to be
non-trivial.  More precisely, we estimate for which densities the
determinization (and subsequent minimization) returns the largest
canonical FTAs.  It is known from the generation of random finite-state
string automata~\cite{chahanparzia04} that the largest deterministic
automata are obtained during determinization if each state~$q \in Q$
occurs with probability~$\tfrac 12$ in the transition target of a
transition, in which the source states are selected uniformly at
random.  This observation was empirically confirmed multiple times by
independent research groups~\cite{les95,zij97,chahanparzia04}.  In
addition, all transition target states are equiprobable for input
states that are drawn uniformly at random.  This latter observation
supports the optimality claim by arguments from information theory
because if all target states of a transition are equiprobable, then
the entropy of the transition is maximal.  While these facts support
our hypothesis (and subsequent conclusions), we also rely on an
empirical evaluation in Section~\ref{sec:data} to validate it.

The determinization (see Section~\ref{sec:fta}) constructs the state
set~${\cal P}(Q)$.  Let ${\cal P}(M) = ({\cal P}(Q), \Sigma, F', P')$
be the deterministic FTA given the random FTA~$M$ with $n = \abs
Q$.\footnote{Note that the transitions of this DTA are random
  variables distributed according to the determinization construction
  of Section~\ref{sec:fta} applied to~$M$.}  According to our
intuition, the probability that a state~$Q' \in {\cal P}(Q)$ is the
transition target of a given transition~$\sigma(Q_1, Q_2)$, where
$Q_1$~and~$Q_2$ are uniformly selected at random from~${\cal P}(Q)$,
should be~$2^{-n}$ [i.e., $\pi(\sigma(Q_1, Q_2) \to Q' \in P') =
2^{-n}$].  Thus, in particular, each given state~$q \in Q$ is in the
(real) successor state~$Q''$ of the given transition~$\sigma(Q_1,
Q_2)$ with probability~$\tfrac 12$ [i.e., $\pi(q \in Q'') = \tfrac
12$].  Given $\sigma \in \Sigma_2$ and $q \in Q$, let $\pi_{\sigma,q}
= \pi(q \in Q'')$, where $Q_1$~and~$Q_2$ are uniformly selected at
random from~${\cal P}(Q)$ and $Q'' = \overline \sigma(Q_1, Q_2)$.
Similarly, given $\alpha \in \Sigma_0$ and $q \in Q$, let
$\pi_{\alpha,q} = \pi(q \in \overline \alpha)$.  It is easily seen
that $\pi_{\sigma, q} = \pi_{\sigma', q'}$ for all $\sigma, \sigma'
\in \Sigma_2$ and $q, q' \in Q$ because in our generation model all
$\sigma$-transitions in~$M$ with $\sigma \in \Sigma_2$ are
equiprobable.\footnote{Note that the individual $\sigma$-transitions
  of~${\cal P}(M)$ are not equiprobable.  For example, the transition
  $\sigma(\emptyset, \emptyset) \to Q$ is impossible.}  Thus, we
simply write~$\pi_2$ instead of~$\pi_{\sigma, q}$.  The same property
holds for nullary symbols, so we henceforth write~$\pi_0$
for~$\pi_{\alpha, q}$.  Moreover, as in the previous section, let
$n$~be the number of states of the original random FTA, and let
$d_2$~and~$d_0$ be the transition densities of it.  For $n = 1$ the
presented intuition cannot be met.

\begin{theorem}
  \label{thm:main}
  Let $n > 1$.  If $d_2 = 4(1 - \sqrt[n^2]{.5})$ and $d_0 = \tfrac
  12$, then $\pi_2 = \pi_0 = \tfrac 12$.
\end{theorem}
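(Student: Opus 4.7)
The two equalities are of very different difficulty. The first, $\pi_0 = 1/2$, is essentially definitional: by the generation model each nullary transition $\alpha \to q$ is placed in $P$ independently with probability $d_0$, and $q \in \overline\alpha$ holds exactly when $\alpha \to q \in P$, so $\pi_0 = d_0 = 1/2$. The substance of the theorem therefore lies in the expression for $d_2$, and the plan is to reduce the event $q \in \overline\sigma(Q_1, Q_2)$ to a disjunction of elementary Bernoulli events with a known per-event probability.

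For $\pi_2$, I would fix $q \in Q$ and $\sigma \in \Sigma_2$ and unpack $\overline\sigma(Q_1, Q_2) = \{q' \mid \exists q_1 \in Q_1, q_2 \in Q_2 \colon \sigma(q_1, q_2) \to q' \in P\}$. Because $Q_1, Q_2$ are uniform on $\mathcal{P}(Q)$, every element of $Q$ lies in each $Q_i$ independently with probability $\tfrac12$. For each ordered pair $(q_1, q_2) \in Q \times Q$, introduce the event $B_{q_1, q_2}$ that $q_1 \in Q_1$, $q_2 \in Q_2$, and $\sigma(q_1, q_2) \to q \in P$. These three conditions are independent in the generation model, so $\pi(B_{q_1, q_2}) = \tfrac12 \cdot \tfrac12 \cdot d_2 = d_2/4$, and by construction $q \in \overline\sigma(Q_1, Q_2)$ iff at least one of the $n^2$ events $B_{q_1, q_2}$ occurs. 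Treating these $n^2$ events as (approximately) independent then yields
\[
\pi_2 = 1 - (1 - d_2/4)^{n^2},
\]
and imposing $\pi_2 = 1/2$ gives $(1 - d_2/4)^{n^2} = 1/2$, hence $1 - d_2/4 = \sqrt[n^2]{.5}$, i.e.\ $d_2 = 4(1 - \sqrt[n^2]{.5})$.

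The principal obstacle is precisely this independence assumption: strictly speaking, for $q_2 \neq q_2'$ the events $B_{q_1, q_2}$ and $B_{q_1, q_2'}$ share the condition $q_1 \in Q_1$ and so are positively correlated. An exact calculation would instead condition on $(Q_1, Q_2)$ and yield $\pi(q \notin \overline\sigma(Q_1, Q_2)) = 2^{-2n}\sum_{k_1=0}^n \sum_{k_2=0}^n \binom{n}{k_1}\binom{n}{k_2}(1-d_2)^{k_1 k_2}$, which does not invert cleanly into a closed-form $d_2$. I would therefore present the theorem as pinning down the density for which the intuitive independence model of Section~\ref{sec:analysis} returns $\pi_2 = 1/2$, and defer validation of the approximation to the empirical comparison between $d_2$ and $d'_2$ in Table~\ref{tab:densities}.
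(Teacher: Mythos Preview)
Your derivation mirrors the paper's proof line for line: the paper also dispatches $\pi_0 = d_0 = \tfrac12$ in one sentence and then writes, without further justification,
\[
\pi(q \notin \overline\sigma(Q_1,Q_2)) \;=\; \prod_{q_1,q_2 \in Q}\Bigl(1 - \pi(q_1 \in Q_1)\,\pi(q_2 \in Q_2)\,\pi(\sigma(q_1,q_2)\to q \in P)\Bigr) \;=\; \bigl(1 - \tfrac{d_2}{4}\bigr)^{n^2},
\]
and solves for $d_2$. So the approach is the same.

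Where you go further is in flagging the independence step. You are correct: the events $B_{q_1,q_2}$ that share a coordinate are positively correlated (e.g.\ $B_{q_1,q_2}$ and $B_{q_1,q_2'}$ both require $q_1 \in Q_1$), so the product over all $(q_1,q_2)$ is not an exact equality. Your conditional expression $2^{-2n}\sum_{k_1,k_2}\binom{n}{k_1}\binom{n}{k_2}(1-d_2)^{k_1k_2}$ is the genuinely exact value of $\pi(q \notin \overline\sigma(Q_1,Q_2))$, and it does not equal $(1-d_2/4)^{n^2}$ in general. The paper presents the factorization as an equality, but the surrounding text (the repeated appeal to ``intuition'' and the empirical comparison of $d_2$ with $d'_2$ in Table~\ref{tab:densities}) makes clear that the formula is meant as a heuristic locating the hardness peak, which is exactly the framing you arrive at. In short: same proof, and your added caveat is a legitimate refinement rather than a deviation.
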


\begin{proof}
  We start with $\pi_0$.  Let $\alpha \in \Sigma_0$ and $q \in Q$.
  Then $\pi(q \in \overline \alpha) = \pi(\alpha \to q \in P) = d_0 =
  \tfrac 12$ as required.  For~$\pi_2$ let $Q_1, Q_2 \in {\cal P}(Q)$
  be selected uniformly at random, $\sigma \in \Sigma_2$, and $q \in
  Q$.  Then
  \begin{align*}
    \pi(q \in \overline \sigma(Q_1, Q_2)) &= 1 - \pi(q \notin
    \overline \sigma(Q_1, Q_2)) \\
    &= 1 - \prod_{q_1, q_2 \in
      Q} \Bigl(1 - \pi(q_1 \in Q_1) \cdot \pi(q_2 \in Q_2) \cdot
    \pi(\sigma(q_1, q_2) \to q \in P) \Bigr) \\ 
    &= 1 - \bigl( 1 - \frac{d_2}{4} \bigr)^{n^2} = 1 - \bigl(1 - 1 +
    \sqrt[n^2]{.5} \bigr)^{n^2} = 1 - (\sqrt[n^2]{.5})^{n^2} = \frac
    12 \enspace.
    \tag*{\qedhere}
    \end{align*}
\end{proof}

Table~\ref{tab:densities} lists some values~$d_2$ computed according
to Theorem~\ref{thm:main} for the sizes~$n \in \{2, \dotsc, 13\}$.

\section{Empirical analysis}
\label{sec:data}
In this section, we want to confirm that the computed densities indeed
represent the most difficult instances for the random FTAs constructed
in Section~\ref{sec:random}.  We use two settings:
\begin{compactitem}
\item[(A)] $\Sigma = \{\alpha^{(0)}, \sigma^{(2)}\}$ and
\item[(B)] $\Sigma = \{\alpha^{(0)}, \sigma^{(2)}, \delta^{(2)}\}$.
\end{compactitem}
For both settings (A)~and~(B) and varying densities~$d_2 = e^{\frac{x
    \cdot {\log D_n}}{20}}$ and $d_0 = \tfrac 12$, where $D_n = 4(1 -
\sqrt[n^2]{.5})$, for all $0 \leq x \leq 40$ and sizes $2 \leq n \leq
13$, we generated at least 40~trim FTAs.  The ratio of trim FTAs for
various densities and state set sizes can be found in
Table~\ref{tab:trim}.  Generally, larger state sets and higher 
densities increase the chance of obtaining a trim FTA.  The choice of
densities we made ensures that sufficiently many data points (in
equally-spaced steps on a logarithmic scale) will exist on both sides
of the density that is predicted to generate the most difficult
instances.  We will discuss why we favored the logarithmic scale over
a linear scale in the next paragraph.  These FTAs were subsequently
determinized, minimized, and the sizes of the canonical FTAs were
recorded.  These operations were performed inside our new tree
automata toolkit \textsc{TAlib}\footnote{Additional information about
  \textsc{TAlib} is available at
  \url{http://www.ims.uni-stuttgart.de/forschung/ressourcen/werkzeuge/talib.en.html}.}.

\begin{figure}[t]
  \centering
  \includegraphics[angle=0,width=.49\linewidth]{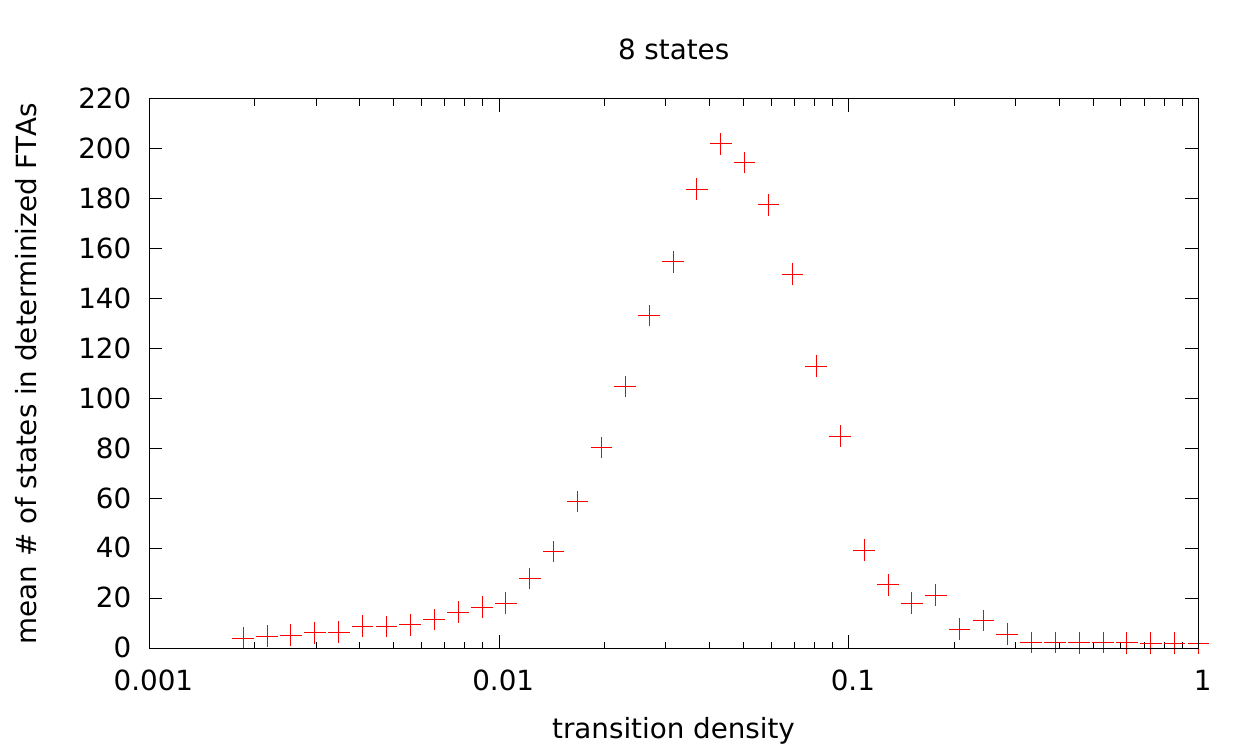}
  \includegraphics[angle=0,width=.49\linewidth]{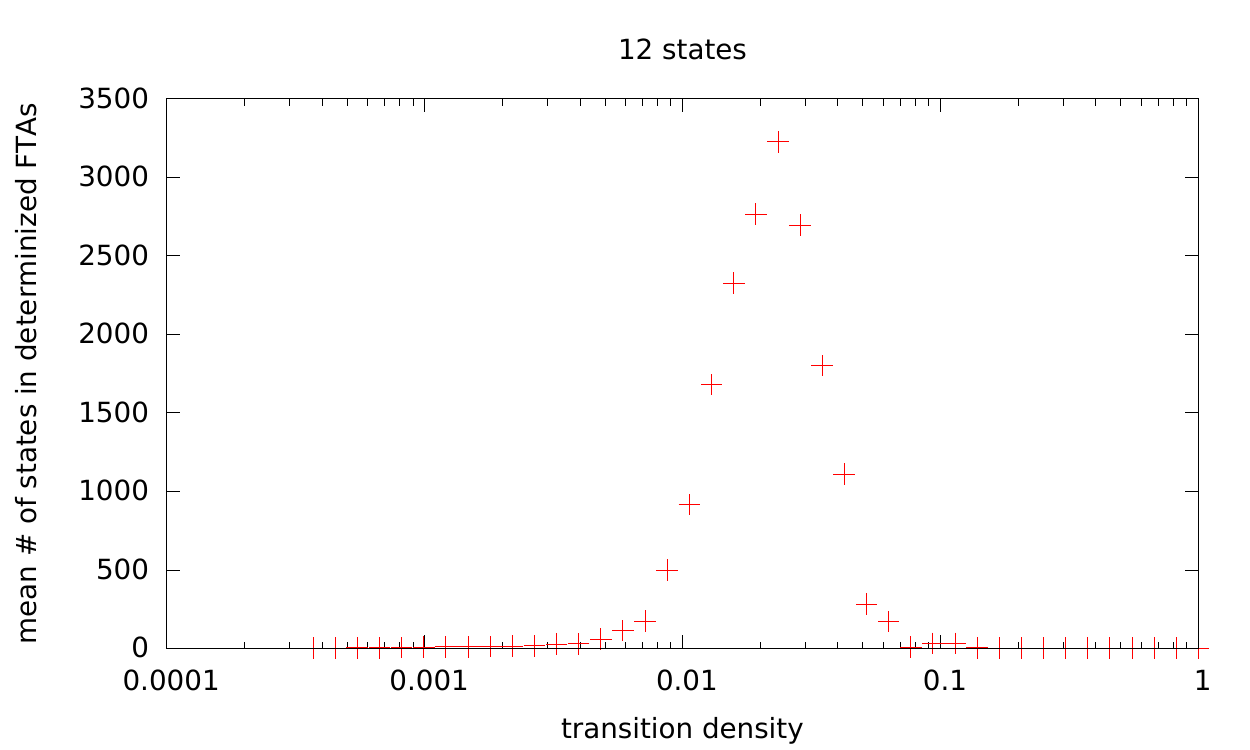}
  \caption{Graphs plotting the mean size of the determinized FTAs
    obtained by determinization over the density for
    FTAs of size $8$~and~$12$.  Minimization reduces the mean size,
    but does not move the peak.}
  \label{fig:graphs}
\end{figure}

Our experiments confirm the theoretical predictions.  A peak in the
mean size of the determinized FTAs can be observed where it is
predicted.  Exemplary graphs for setting~(A) and $n \in \{8, 12\}$ are
presented in Figure~\ref{fig:graphs} on a logarithmic scale.  Since
these graphs appear to be log-normal distributions\footnote{A
  log-normal distribution is a distribution of a random variable whose
  logarithm is normally distributed.  A log-normal distribution
  usually arises as the product of independent normal distributions.
  We leave the question of how exactly this distribution can be
  derived for further research.
}, we computed the mean and the variance of these log-normal
distributions, interpreting the density as the random variable and the
number of states as the frequency.  The relevant statistics are
reported in Table~\ref{tab:densities} for setting~(A).  All predicted
densities are in the confidence interval for the confidence level $p >
.95$ (that is, the predicted density is within $1.96 \cdot \sigma$
distance of the observed mean, where $\sigma$ is the standard
deviation).

It is worth noting that the location of the peak does not change
between settings (A)~and~(B).  This means that the size of the
alphabet of binary symbols does not influence the hardness of the
problem; the only difference is the size of the resulting determinized
FTAs, which is generally larger in setting~(B).  Also, minimization
does not change the location of the peak, which means that hard
instances for determinization are also hard instances for
minimization.  In addition, we also performed experiments that
confirmed the similar result of~\cite{chahanparzia04} for finite-state
string automata using the string automata toolkit FSM\textless
2.0\textgreater\ of~\cite{han10}.

\begin{table}[t]
  \centering 
  \begin{tabular}{r|r|r|r|r|r|r|r|r|r|r}
    $d_2\, \backslash\, n$ & 2 & 4 & 6 & 7 & 8 & 9 & 10 & 11 & 12 & 13
    \\ \hline
    .01 & & & & 7\% & 11\% & 18\% & 27\% & 38\% & 50\% & 64\% \\
    .05 & & & 68\% & 82\% & 92\% & 98\% & 99\% & 100\% & 100\% & 100\% \\
    .10 & & 54\% & 90\% & 96\% & 99\% & 99\% & 100\% & 100\% & 100\% &
    100\% \\
    .25 & & 83\% & 97\% & 98\% & 99\% & 100\% & 100\% & 100\% & 100\%
    & 100\% \\
    .50 & 47\% & 88\% & 96\% & 98\% & 100\% & 100\% & 100\% & 100\% &
    100\% & 100\% 
  \end{tabular}
  \caption{Ratio of trim FTAs in a set of randomly generated FTAs
    parameterized by density~$d_2$ (rows) and number~$n$ of states
    (columns).  Blank entries indicate that no such experiment was
    performed.} 
  \label{tab:trim}
\end{table}

\bibliographystyle{eptcs}
\bibliography{ttatt}

\end{document}